\theoremstyle{plain}% default
\newtheorem{thm}{Theorem}
\newtheorem{prop}{Proposition}
\theoremstyle{definition}
\newtheorem{rem}{Remark}
\newtheorem{ex}{Example}
\numberwithin{equation}{section}
\newcommand{\pd}{\,\partial}
\newcommand{\mbb}{\mathbb}
\newcommand{\pdf}[2][]{\,\frac{\partial #1}{\partial #2}}
\newcommand{\ep}{\varepsilon}
\newcommand{\re}{\mbb R}
\newcommand{\al}{\alpha}
\newcommand{\ms}{\mathscr}
\newcommand{\eqal}[1]{\begin{equation}\begin{aligned}#1\end{aligned}\end{equation}}
\newcommand{\deq}{\,\dot = \,}
\begin{document}

\title{Quasi-Noether systems and quasi-Lagrangians}
%%%%%
\author{V. Rosenhaus*}                    
\author{Ravi Shankar}

\affil{\footnotesize{* Department of Mathematics and Statistics, California State University, Chico, CA 95929, USA, vrosenhaus@csuchico.edu}}
\affil{\footnotesize{Department of Mathematics, University of Washington, Seattle, WA 98195, USA, 
shankarr@uw.edu}}  
\date{}

\maketitle

\begin{abstract}
We study differential systems for which it is possible to establish a correspondence between symmetries and conservation laws based on Noether identity: quasi-Noether systems. We analyze Noether identity and show that it leads to the same conservation laws as Lagrange %%%%% 
(Green-Lagrange) identity.
We discuss quasi-Noether systems, and %%%%%
some of their properties, and generate classes of quasi-Noether differential equations of the second order.  
%%We %%%%%also introduce 
%%discuss quasi-Lagrangians and show their application.
We next introduce a more general version of quasi-Lagrangians which allows us to extend Noether theorem. Here, variational symmetries are only sub-symmetries, not true symmetries. %%, and we demonstrate this with an example where all conservation laws can be generated using the quasi-Lagrangian structure.  
%%As an application, w
We finally introduce the critical point condition for evolution equations with a conserved integral, demonstrate examples of its compatibility, and compare the invariant submanifolds of quasi-Lagrangian systems with those of Hamiltonian systems.
\end{abstract}

\section{Introduction}

\smallskip

For variational systems the relation between symmetries of the Lagrangian function and conservation laws was known from the classical Noether result 
\cite{Noether}. It was shown that there is one-to-one correspondence between variational symmetries (symmetries of variational functional) and local conservation laws %%%%%
of a differential system, \cite{Olver}.
\smallskip

In this paper, we study differential systems that allow a Noether-type association between its conservation laws and symmetries (quasi-Noether systems). Our approach is based on the Noether operator identity \cite{Rosen} that relates 
the infinitesimal transformation operator to the Euler and divergence operators.  The Noether operator identity has been shown to provide a Noether-type relation between symmetries and conservation laws not only for Lagrangian systems, but also for a large class of differential systems that may not have a well-defined variational functional, see \cite{Rosenhaus94}, \cite{Rosenhaus96}.
\smallskip
Noether operator identity was also demonstrated to allow derivation of extension of Second Noether Theorem  for non-Largangian systems possessing infinite symmetry algebras parametrized by arbitrary functions of all independent variables, \cite{Rosenhaus15}. These infinite symmetry algebras were shown to lead to differetial identities between the equations of the original differential system and their derivatives.
\smallskip
Recently Noether identity was used to generate relations between sub-symmetries \cite{RS2018} and corresponding local conservation laws \cite{RS2019} for quasi-Noether systems.

\medskip

In this paper, we analyze quasi-Noether systems and some of their properties. 
In Section \ref{sec:var}, we review known correspondence between symmetries and conservation laws for variational systems from the standpoint of the Noether identity.   
In Section \ref{sec:quasi}, we discuss this correspondence for a more general class of differential systems (quasi-Noether systems) that includes non-variational problems. We review conservation laws obtained with the use of approach based on the Noether identity and compare them with the results obtained from the Lagrange (Green-Lagrange) identity. 
%%We calculate characteristics of conservation laws obtained from Noether identity and show that in general, these conservation laws are nontrivial. 
We also find the class of quasi-Noether evolution equations of the second order, and quasi-linear equations of the second order.  In Section \ref{sec:qLagr}, we %%%%% introduce 
%discuss a concept of quasi-Lagrangians and demonstrate its applications.
generalize the concept of quasi-Lagrangian, and use it to prove a new extension of the Noether theorem to non-Lagrangian systems.  %%%%%However, 
In this approach, variational symmetries of quasi-Lagrangians are only sub-symmetries, and need not be symmetries.  We give 
%%a detailed
an example where all 
lower 
conservation laws are generated in this way, but where the previous correspondence fails.  As a geometric application, we compare the invariant submanifolds of quasi-Noether systems to those of Hamiltonian systems and show %%%%%
that 
they behave in an ``opposite" way. %%%% This requires introducing 
To address this fact we introduce the notion of a critical point of a conserved quantity, and %%%%%we 
demonstrate examples of the compatibility of the critical point condition with the time evolution of the PDE system.

\section{Symmetries and conservation laws of variational systems}
\label{sec:var}

Let us briefly outline the approach we follow.%%; for details see \cite{Rosenhaus02,Rosenhaus07}.  By a conservation law of a differential system 

\[
\Delta^a(x,u,u_{(1)},u_{(2)},\dots ) = 0, \qquad a = 1,\dots,q,
\]
we mean a divergence expression
\eqal{\label{CL}
D_i K^i(x,u,u_{(1)},u_{(2)},\dots ) \doteq 0,
}
that vanishes on all solutions of the original system; we denote this type of equality by ($\doteq$).  Here, $x=(x^1,x^2,\dots,x^p)$ and $u=(u^1,u^2,\dots,u^q)$ are the tuples of 
independent and dependent variables, respectively; 
$u_{(r)}$ is the tuple of $r$th-order derivatives of~$u$, $r=1,2,\dots$;
$\Delta^a$ and $K^i$ are differential functions, i.e. smooth functions of~$x$, $u$ and a finite number of derivatives of~$u$ (see \cite{Olver});
$i,j = 1,\dots, p$, $a=1,\dots,q$.  We assume summation over repeated indices.

We let
\begin{align*}
D_i=\partial_i+u^a_i\partial_{u^a}+u^a_{ij}\partial_{u^a_{ij}}+\dots=\partial_i+u^a_{iJ}\partial_{u^a_J}
\end{align*}
be the $i$--th total derivative, $1\le i\le p$, the sum extending over all (unordered) multi-indices $J=(j_1,j_2,...,j_k)$ for $k\ge 0$ and $1\le j_k\le p$.

Two conservation laws $K$ and $\tilde K$ are equivalent if they differ by a trivial conservation law \cite{Olver}.  A conservation law $D_i P^i \doteq 0 $ is trivial if a linear combination of 
two kinds of triviality is taking place: 1. The $p$-tuple $P $ vanishes on the solutions of the original system: $P^i \doteq 0$. 2. The divergence identity is satisfied  for any point $[u]=(x,u_{(n)})$  in the jet space (e.g. div rot $u=0$).
									
We consider smooth functions $u^a=u^a(x)$ defined on an 
%removed non-empty, because obvious and clunky
open subset $D\subset\re^p$.  
%  of $p$-dimensional space-time. 
Let
\[
S ={\int_D} {L(x,u, u_{(1)}, \dots)\: d^{p}x} 
\]											
be the action functional, where $L$ is the Lagrangian density. The equations of motion are
\begin{equation}\label{Rosenhaus:equation2}	
E_a(L)
 \equiv \Delta^a(x,u, u_{(1)}, \dots) = 0,\quad 1\le a\le q,
\end{equation}										
where 
\begin{equation}\label{Rosenhaus:equation3}	
E_a 
= \frac{\partial }{\partial u^a} - \sum\limits_i {D_i \frac{\partial }{\partial u_i^a } }
+ \sum\limits_{i \leqslant j} {D_i } D_j \frac{\partial }{\partial u_{ij}^a } + \cdots 
\end{equation}
is the $a$-th Euler (Euler--Lagrange) operator (variational derivative).  
We call the tuple $E=(E_1,\dots,E_n)$ the Euler operator.  In the notation of \cite{Olver}, we could 
give 
%the Euler operator 
it the following form:
\begin{align}\label{euler_op}
E_a=(-D)_J\frac{\partial}{\partial_{u^a_J}},\hspace{4 mm}1\le a\le q,
\end{align}
%%
%the sum extending over all (unordered) multi-indices $J=(j_1,j_2,...,j_k)$ for $1\le k\le p$. For $k=0$, we set $u^a_J=u^a$ and $D_J=1$. 
%%
The operator $(-\mathrm D)_J$ is defined here as $(-D)_J=(-1)^kD_J=(-D_{j_1})(-D_{j_2})\cdot\cdot\cdot(-D_{j_k})$.
The operator $E_a$ annihilates total divergences.  

Consider an infinitesimal (one-parameter)
transformation with the canonical infinitesimal operator
\begin{align}\label{Rosenhaus:equation4}	
X_\alpha = \alpha ^a &\frac{\partial }{\partial u^a} + \sum\limits_i {(D_i 
\alpha ^a)\frac{\partial }{\partial u_i^a }} + \sum\limits_{i \leqslant j} {(D_i } 
D_j \alpha ^a)\frac{\partial }{\partial u_{ij}^a } + \cdots\, = \, \mathrm (D_{J}{\alpha}^a) \partial_{u^a_J}.
\end{align}
where 
$\alpha ^a = \alpha ^a (x,u, u_{(1)}, \dots$), and the sum is taken over all (unordered) multi-indices $J$.										
The variation of the functional $S$ under the transformation with operator 
$X_\alpha $ is
\begin{equation}								\label{Rosenhaus:equation5}	
\delta S = \int_D{X_\alpha L  \,d^{p}x}\,.
\end{equation}										
$X_\alpha $ is a variational (Noether) symmetry if 
\begin{equation}								\label{Rosenhaus:equation6}	
X_\alpha L =  D_i M^i,
\end{equation}										
where $M^i=M^i (x,u,u_{(1)},\dots)$ are smooth functions of their arguments. 
The Noether (operator) identity \cite{Rosen} (see also, e.g. \cite{Ibragimov1985} or \cite{Rosenhaus94}) relates the operator $X_\alpha$ to $E_a$,
\begin{equation} 								\label{Rosenhaus:equation7}	
X_\alpha = \alpha^aE_a +{D_i R^{i} }, 		
\end{equation}										
\begin{equation}	\label{Rosenhaus:equation8}	
R^{i} = \alpha ^a\frac{\partial }{\partial u_i^a } + \left\{ 
{\sum\limits_{k \geqslant i} {\left( {D_k \alpha ^a} \right) - \alpha 
^a\sum\limits_{k \leqslant i} {D_k } } } \right\}\frac{\partial }{\partial 
u_{ik}^a }+ \cdots.									
\end{equation}
The expression for $R^{i}$ can be presented in a more general form \cite{Rosen,Lunev}:
\begin{align}\label{Ri}
R^i=(D_K{\alpha}^a)\,(-D)_{J}{\partial}_{u^a_{iJK}},
\end{align}
where $J$ and $K$ sum over multi-indices.

Applying the identity \eqref{Rosenhaus:equation7} with 
\eqref{Rosenhaus:equation8} to $L$ and using \eqref{Rosenhaus:equation6},	  
we obtain
\begin{equation}								\label{Rosenhaus:equation9}	
D_i (M^i - R^i L) = \alpha ^a\Delta ^a, 
\end{equation}										
which on the solution manifold ($\Delta = 0$, $D_i \Delta = 0$, \dots ) 
\begin{equation}								\label{Rosenhaus:equation10}	
D_i (M^i - R^i L)\doteq 0,   
\end{equation}										
leads to the statement of the First Noether Theorem: any 
one-parameter variational symmetry transformation with infinitesimal 
operator $X_\alpha$ \eqref{Rosenhaus:equation4} gives 
rise to %%
the %%
conservation law \eqref{Rosenhaus:equation10}.
\medskip

Note that Noether \cite{Noether} used the identity \eqref{Rosenhaus:equation9} and not the operator identity \eqref{Rosenhaus:equation7}.  The first mention of the Noether operator identity  \eqref{Rosenhaus:equation7}, to our knowledge, was made in \cite{Rosen}.
 
\smallskip

We next consider differential systems that may not have %%%%% without 
well-defined Lagrangian  functions.

\section{Symmetries and conservation laws of quasi-Noether systems}
\label{sec:quasi}

For a general differential system, a relationship between symmetries and 
conservation laws is unknown. In \cite{Rosenhaus94} and \cite{Rosenhaus96}, 
an approach based on the Noether operator identity \eqref{Rosenhaus:equation7} was suggested to relate symmetries to
conservation laws for a large class of differential systems that may not 
have well-defined Lagrangian functions.
In the current paper, we will follow this approach.

\subsection{Approach using the Noether operator identity}

We consider $q$ smooth functions 
$u=(u^1,u^2,...,u^q)$ of $p$ independent variables $x=(x^1,x^2,...,x^p)$ 
defined on some nonempty open subset of $\mathbb{R}^p$.
Consider a system of $n$ $\ell$th order differential equations 
$\Delta=(\Delta^1,\Delta^2,...,\Delta^n)$ for functions $u$ :
\begin{align}\label{sys}								
\Delta^a(x,u,u_{(1)},u_{(2)},\dots, u_{(l)})=0, \quad a=1,2,...,n.
\end{align}
Here, each $\Delta^a(x,u,u_{(1)},u_{(2)},\dots, u_{(l)} )$,  $a=1,2,...,n$ 
is a smooth function of $x$, $u$, and all partial derivatives of each 
$u^v$, ($v=1,\dots,q$) with respect to the $x^i$ ($i=1,\ldots,p$)
up to the $\ell$th order (differential function \cite{Olver}). 
%%%%
We assume that $\Delta$ \eqref{sys} is normal, totally nondegenerate system
(locally solvable at every point, and of maximal rank) \cite{Olver}. %%%%

Let 
$\Delta(x,u,u_{(1)},u_{(2)},\dots, u_{(l)} )\equiv \Delta[u]\equiv \Delta$, 
$x_J=(x_{j_1},x_{j_2},...,x_{j_k})$ and $u^v_J$ be partial derivatives, where 
$J=(j_1,j_2,...,j_k)$. Applying the Noether operator identity \eqref{Rosenhaus:equation7} 
to a combination of original equations with some coefficients (differential operators) $\beta^a$ $\Delta^a$, we obtain
\begin{equation}								\label{Rosenhaus:equation12}	
X_\alpha( {\beta^a\Delta^a})= \alpha^v E_v (\beta^a\Delta^a) +{D_{i} R^i(\beta^a\Delta^a)},
\qquad a =1, \dots n, \quad v=1, \dots, q, \quad i=1, \dots, p.
\end{equation}
If our system \eqref{sys} allows the existence of coefficients $\beta^a$
(\textit{cosymmetries}) such that 
\begin{equation}  \label{quasi-Noether}										
E_v (\beta^a\Delta^a)\doteq 0, \qquad a= \dots, n, \quad v=1, \dots, q
\end{equation}										
on the solution manifold 
($\Delta = 0,   D_i \Delta = 0, \dots $), then according to 
\eqref{Rosenhaus:equation12}, each symmetry of the system $X_\alpha$ will 
lead to a local conservation law (see \cite{Rosenhaus96}, see also \cite{Rosenhaus94})
\begin{equation}								\label{Rosenhaus:equation14}	
D_i R^i (\beta^a\Delta^a)\doteq 0.   
\end{equation}	
for any differential systems of class \eqref{quasi-Noether}.  In \cite{Rosenhaus94}, 
the quantity $\beta^a\Delta^a$ was referred to as an alternative Lagrangian. 

Let us note that the correspondence between symmetries and local conservation laws defined 
above for differential systems without well-defined Lagrangian functions may not be one-to-one
or onto, 
as in the case of variational symmetries and local conservation laws \cite{Olver}, and non-trivial symmetries may lead to trivial conservation laws.  
%%START
If $\beta$ generates a conservation law, i.e. $E(\beta\cdot\Delta)\equiv 0$, then it was shown in \cite{Anco2017} that translation symmetries $\al=a^iu_i, a=const.$ lead to trivial conservation laws if $\beta_x=0$.  
%In Section \ref{Ex1sec}, we give an example of the %%%%%

%%
\begin{comment}
situation
when cosymmetry $\beta$ does \textit{not} generate a conservation law.  
\end{comment}

\medskip
In general, the nontriviality %%%%%
of a conservation law 
is determined by the characteristic.  To compute it explicitly, let $X_\al(\beta\cdot\Delta)=A\cdot \Delta$ and $E(\beta\cdot\Delta)=B\cdot \Delta$ for $A,B$ differential operators.  Then the Noether identity and integration by parts yield
\eqal{
A\cdot \Delta=X_\al(\beta\cdot\Delta)&=\al\cdot E(\beta\cdot\Delta)+div\\
&=\al\cdot (B\cdot\Delta)+div\\
&=\Delta\cdot(B^*\cdot \al)+div,
}
where $(B^*)_{va}=(-D)_I\circ B_{avI}$ is the adjoint operator.  Integrating by parts the LHS and rearranging yields overall
\eqal{
\,[B^*\cdot\al-A^*(1)]\cdot\Delta=div,
}
where the divergence is equivalent to \eqref{Rosenhaus:equation14} on solutions, and $A^*(1)_v=(-D)_IA_{vI}$.  
\qed
%%END

Thus, the conservation law obtained from Noether identity and corresponding to symmetry $X_{\alpha} $ is nontrivial if

\begin{align}
B^*\cdot\al-A^*(1) \ne 0  
\end{align}
 on solutions of the original system \eqref{sys}.
 %, as discussed in Section \ref{subsec:subsymCL}.
\smallskip

The condition \eqref{quasi-Noether} can be written in a somewhat 
more general form \cite{Rosenhaus96}
\begin{equation}								\label{Rosenhaus:equation17}	
E_v (\beta^{ac}\Delta^a) \doteq 0, \qquad a, c = 1, \dots n, \quad v=1, \dots, q.
\end{equation}

In terms of the Fr\'echet derivative operator $ D_{\Delta} $ and its adjoint 
$D^{*} _{\Delta}$ \cite{Olver}
\begin{align}\label{Rosenhaus:equation18}
\left(\mathrm D_{\Delta}\right)_{av} & =   \sum\limits_{J} \frac {\partial {\Delta^a}}
{\partial {u^v_J}} D_J, \\
\left(\mathrm D^{*}_{\Delta}\right)_{av} & {\beta^a}  = \sum\limits_{J} 
(-1)^kD_{J}\left( \frac {\partial {\Delta^a}}
{\partial {u^v_J}} \beta^a\right),
\end{align}
the condition \eqref{quasi-Noether} was shown in \cite{Rosenhaus97} to 
be related to the condition of self-adjointness of the operator $ \mathrm D_{\Delta} 
$ (generalized Helmholtz condition)
\begin{align}\label{Rosenhaus:equation19}
\mathrm D^{*} _{\Delta} - \mathrm D_{\Delta} = 0,
\end{align}
this relationship being:
\begin{equation}								\label{Rosenhaus:equation20}	
E_v (\beta^a\Delta^a) \doteq (\mathrm D^{*}_{\Delta}-\mathrm D_{\Delta})_{av} \beta^a, 
\qquad a = 1, \dots n, \quad v=1, \dots, n.							
\end{equation}

%%We call differential systems satisfying condition \eqref{quasi-Noether} (or \eqref{Rosenhaus:equation17}) \textit{quasi-Noether systems}. 		
Expression \eqref{Rosenhaus:equation20} provides a relationship between the 
condition \eqref{quasi-Noether} for a system to be quasi-Noether and the existence 
of a variational functional for a transformed differential 
system.

The condition \eqref{quasi-Noether} can be considered as defining quasi-Noether systems, see also \cite{Rosenhaus15}. 
A system \eqref{sys} is \emph{quasi-Noether} if there exist functions (differential operators) $\beta^a$ such that the condition \eqref{quasi-Noether} is satisfied. In \cite{Rosenhaus94}, 
the quantity $\beta^v\Delta_v$ was referred to as an alternative Lagrangian. %%%%
 If coefficients $\beta^a$ generate a conservation law, they are %%%%referred 
related to adjoint symmetries \cite{Sarlet1987}, and referred to as %%%%
characteristics of a corresponding conservation law \cite{Olver}, generating functions \cite{Vinogradov} or multipliers \cite{AncoBluman1997}. %%%%%
 In general, we call them cosymmetries
 \cite{Vinogradov} 
if they satisfy \eqref{quasi-Noether}.

\medskip

It should be noted that the condition \eqref{quasi-Noether} (for a system to be quasi-Noether and possess a correspondence between symmetries and conservation laws) was obtained earlier within an alternative approach based on the Lagrange identity.
In \cite{Vladimirov1980}, the classical Lagrange identity (Green's formula)
was used for generating conservation laws for a linear differential system.  A condition for the existence of a certain conservation law written in terms of an adjoint differential operator was presented in \cite{Vinogradov} within a general framework of algebraic
geometry. In \cite{Zharinov1986}, a correspondence 
between symmetries and conservation laws  based on Green's formula and a  condition similar to 
\eqref{quasi-Noether} was obtained for evolution systems. For the case of point transformations, this correspondence was discussed in \cite{Caviglia1986}. In \cite{Sarlet1987}, the geometric meaning 
of this condition was discussed for mechanical systems. In \cite{Lunev}, a condition that can be reduced to \eqref{quasi-Noether} was presented for a general differential system.  

\medskip

Note that the condition  \eqref{quasi-Noether} played a key role in later developed direct 
method \cite{AncoBluman1997} and the nonlinear self-adjointness approach \cite{Ibragimov2011}.

\medskip
\subsection{Approach using Lagrange identity}
Let us briefly describe the alternative approach based on the Lagrange identity and compare the results with those of the Noether identity approach.

%%
%Below, I fixed several index errors in the adjoint Frechet derivatives.
%%
The well-known Lagrange identity is as follows:
\begin{align}\label{lag}
\beta^a(\mathrm D_\Delta)_{av}\alpha^v-\alpha^v(\mathrm D^*_\Delta)_{av}\beta^a=D_iQ^i[\alpha,\beta,\Delta],
\end{align}
where $\mathrm D_\Delta$ is the Fr\'echet derivative of $\Delta$, and $\mathrm D^*_\Delta$ is its adjoint. 
%%
%Explicit expressions for the trilinear fluxes $Q[\alpha,\beta,\Delta]$ are given in, e.g. \cite{Bluman_Chev}. 
An explicit expression for the trilinear fluxes $Q[\alpha,\beta,\Delta]$ was given in \cite{Lunev}:
\begin{align*}
Q^i[\alpha,\beta,\Delta]=(-1)^{|J|}D_K\alpha^vD_J\left(\beta^a\partial_{u^v_{iJK}}\Delta^a\right),\quad 1\le i\le p.
\end{align*}
Using the fact that 
\begin{align} 
(\mathrm D_\Delta)_{av}\alpha^v = X_{\alpha}\Delta^a,
\end{align}
(see \eqref{Rosenhaus:equation4}, and \eqref{Rosenhaus:equation18}) we can express \eqref{lag} as follows:
\begin{align}\label{lagS}
\beta^aX_{\alpha}\Delta^a-\alpha^v(\mathrm D^*_\Delta)_{av}\beta^a=D_iQ^i[\alpha,\beta,\Delta].
\end{align}

If, for a given system $\Delta$, there exist functions $\beta^a$ and operators $\Upsilon^{va}=\upsilon^{vaJ}D_J$ such that the following relationships hold:
\begin{align}\label{dstar}
(\mathrm D^*_\Delta)_{av}\beta^a=\Upsilon^{va}\Delta^a,\quad 1\le v\le q,
\end{align}
then $(\mathrm D^*_\Delta)_{av}\beta^a\,\dot{=}\,0$.  Thus, for each symmetry $X_\alpha$ of the system $\Delta=0$, ($X_\alpha\Delta^a=\Lambda^{ab}\Delta^b$ for some operators $\Lambda^{ab}=\lambda^{abJ}D_J$), equation \eqref{lagS} provides a corresponding conservation law:
\begin{align}
\begin{split}
D_iQ^i[\alpha,\beta,\Delta]&=(\beta^a\Lambda^{ab}-\alpha^v\Upsilon^{va})\Delta^a\,\dot{=}\,0.
\end{split}
\end{align}

We now show that %%%%this formulation of 
the correspondence between symmetries and conservation  laws in terms of Lagrange identity is equivalent to the one using the Noether identity and leads to the same conservation laws. Indeed, using the product rule \cite{Olver}:
\begin{align}
E_v(\beta^a\Delta^a)=(\mathrm D_{\Delta}^*)_{av}\beta^a+(\mathrm D_\beta^*)_{av}\Delta^a,\quad 1\le v\le q,
\end{align}
in
\eqref{lagS} gives:
\begin{align*}
\beta^aX_\alpha\Delta^a-\alpha^vE_v(\beta^a\Delta^a)+\alpha^v(\mathrm D_\beta^*)_{av}\Delta^a=D_iQ^i.
\end{align*}
Using
$\beta^aX_\alpha\Delta^a=X_\alpha(\beta^a\Delta^a)-\Delta^aX_\alpha\beta^a$ %%%%gives:
we obtain %%%%%
\begin{align}\label{lagN}
X_\alpha(\beta^a\Delta^a)=\alpha^vE_v(\beta^a\Delta^a)+D_iQ^i[\alpha,\beta,\Delta]+(X_\alpha\beta^a-\alpha^v(\mathrm D^*_\beta)_{av})\Delta^a.
\end{align}
The last term in \eqref{lagN} is a total divergence by \eqref{lag}: 
\begin{align*}
\Delta^aX_{\alpha}\beta^a-\alpha^v(\mathrm D^*_\beta)_{av}\Delta^a
&
=\Delta^a(\mathrm D_\beta)_{av}\alpha^v-\alpha^v(\mathrm D^*_\beta)_{av}\Delta^a 
=D_iQ^i[\alpha,\Delta,\beta].
\end{align*} 
Thus, \eqref{lagN} provides the same result as the Noether identity:
\begin{align}
X_\alpha(\beta^a\Delta^a)=\alpha^vE_v(\beta^a\Delta^a)+D_i(Q^i[\alpha,\Delta,\beta]+Q^i[\alpha,\beta,\Delta]).
\end{align}

\bigskip

%%%%
Let us show that the condition used in direct 
method \cite{AncoBluman1997}
\begin{equation} 								\label{Rosenhaus:equation21}	
(\mathrm D^{*}_{\Delta})_{av} \beta^a\,\dot{=}\,0						
\end{equation}
is equivalent to the quasi-Noether condition \eqref{quasi-Noether}.  Indeed,
using the identity \cite{Olver}
\begin{equation} 								\label{Rosenhaus:equation22}	
E_v (\beta^{a}\Delta^a)=(\mathrm D^{*}_{\Delta})_{av} \beta^a + (\mathrm D^{*}_{\beta})_{av} \Delta^a,
\end{equation}
and the fact that 
\begin{equation}								\label{Rosenhaus:equation23}	
(\mathrm D^{*}_{\beta})_{av} \Delta^a \doteq 0,
\end{equation}
we obtain condition \eqref{Rosenhaus:equation21}
\begin{equation}								\label{Rosenhaus:equation24}	
E_v (\beta^{a}\Delta^a) \doteq (\mathrm D^{*}_{\Delta})_{av} \beta^a \doteq 0.
\end{equation}
An alternative key expression of the direct method
\begin{equation}								\label{Rosenhaus:equation25}	
E_v (\beta^{a}\Delta^a)= 0
\end{equation}
is a special case of \eqref{quasi-Noether}. Note also that the direct method aims at the generation of conservation laws for a differential system without regard to its symmetries while the goal of both approaches above taken earlier was to establish a correspondence between symmetries and conservation laws.
%%{\it{Mention Olver's book, and Anderson's "variational bicomplex".}}

%%%%The condition \eqref{quasi-Noether} has also played a key role in the nonlinear self-adjointness approach \cite{Ibragimov2011}.

\subsection{Quasi-Noether systems}
As noted above the condition \eqref{quasi-Noether} 
(or \eqref{Rosenhaus:equation17}) determines \textit{quasi-Noether systems}. 
 
It can be shown that the general case of differential systems \eqref{sys} satisfying the condition \eqref{quasi-Noether}, with $\beta^a$ being differential operators
\begin{equation}
\beta^{a}=\beta^{aJ}[u]D_J, \qquad D_J=D_{j_1 }D_{j_2}\cdot\cdot\cdot D_{j_k}, \quad
D_{j_r}= \frac {d}{dx_{j_r}},\quad r=1,2,. . ., k,				\label{Rosenhaus:equation26}
\end{equation} 
can be reduced to the case of the condition \eqref{quasi-Noether} with the %%%%
$\beta^{aJ}$ being differential functions. Indeed, (see \cite{Olver}) 
\begin{align}									\label{Rosenhaus:equation27}				
\beta^a \Delta^a = \beta^{aJ}D_{J}\Delta^a = 
D_{J}\left(\beta^{aJ} \Delta^a \right) + \left((-\mathrm D)_{J}\beta^{aJ}\right) \Delta^a.	
\end{align}
Since the contribution of the first term in the RHS, being a total divergence, is zero after applying the Euler operator, we find that
\begin{equation}								\label{Rosenhaus:equation28}	
E_v (\beta^a \Delta^a) \doteq E_v(\bar{\beta}^a \Delta^a)\doteq 0, 
\qquad a = 1, \dots n, \quad v=1, \dots, q,
\end{equation}
where each $\bar{\beta}^a = \left((-\mathrm D)_{J}\beta^{aJ}\right)$ is a differential function. 

\medskip

In the following theorem, we prove that local conservation laws exist only 
for quasi-Noether systems.

\begin{thm}[Quasi-Noether systems and conservation laws]
\label{thm:quasi-Noether}
Any differential system \eqref{sys} that possesses local conservation laws is quasi-Noether.
%%%%%Any 
A quasi-Noether system that admits continuous symmetries %%%%%
in general, possesses local conservation laws.  
\end{thm}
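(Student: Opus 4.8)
The plan is to prove the two assertions separately, each by a short application of the Euler and Noether operators, with the bookkeeping about triviality deferred to the criterion already established above.

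For the first statement, suppose \eqref{sys} possesses a local conservation law $D_iK^i\doteq 0$ in the sense of \eqref{CL}. Since $\Delta$ is assumed normal and totally nondegenerate, the structure theory of the variational calculus (see \cite{Olver}) lets me replace $K$ by an equivalent conservation law, differing from it only by a trivial one, that is written in characteristic form $D_i\tilde K^i=Q_a\Delta^a$, holding identically on the jet space for some differential functions $Q_a$. Applying the Euler operator $E_v$ to both sides and using that $E_v$ annihilates total divergences, I obtain $E_v(Q_a\Delta^a)=0$ identically. This is exactly \eqref{Rosenhaus:equation25}, a special case of the quasi-Noether condition \eqref{quasi-Noether} with cosymmetry $\beta^a=Q_a$; hence the system is quasi-Noether. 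The main obstacle here is precisely this appeal to the characteristic form: one needs the fact that, for a totally nondegenerate system, every differential function vanishing on the solution manifold lies in the module generated by the $\Delta^a$ and their total derivatives, so that $D_iK^i$ can be rewritten as $Q_a\Delta^a$ after subtracting a trivial conservation law. This is where the normality and nondegeneracy hypotheses do the work.

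For the second statement, assume \eqref{sys} is quasi-Noether, so there exist cosymmetries $\beta^a$ with $E_v(\beta^a\Delta^a)\doteq 0$ as in \eqref{quasi-Noether}; by \eqref{Rosenhaus:equation28} I may take the $\beta^a$ to be differential functions. Let $X_\alpha$ be a continuous symmetry of \eqref{sys}, so that $X_\alpha\Delta^a\doteq 0$. Applying the Noether operator identity in the form \eqref{Rosenhaus:equation12} to $\beta^a\Delta^a$ gives
\[
X_\alpha(\beta^a\Delta^a)=\alpha^vE_v(\beta^a\Delta^a)+D_iR^i(\beta^a\Delta^a).
\]
On the solution manifold the left-hand side vanishes, since $X_\alpha(\beta^a\Delta^a)=(X_\alpha\beta^a)\Delta^a+\beta^a(X_\alpha\Delta^a)$ and both $\Delta^a$ and $X_\alpha\Delta^a$ vanish there, and the first term on the right vanishes by the quasi-Noether condition \eqref{quasi-Noether}. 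Hence $D_iR^i(\beta^a\Delta^a)\doteq 0$, which is the conservation law \eqref{Rosenhaus:equation14}.

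Finally, the qualifier ``in general'' is essential and records the genuine limitation of this converse direction: the divergence expression $D_iR^i(\beta^a\Delta^a)\doteq 0$ may be trivial in the sense made precise earlier, namely exactly when the characteristic $B^*\cdot\alpha-A^*(1)$ vanishes on solutions of \eqref{sys}. So the honest content of the second statement is that the construction always produces a candidate conservation law through \eqref{Rosenhaus:equation14}, nontrivial precisely under the criterion already derived; beyond invoking that criterion, no further argument is required.
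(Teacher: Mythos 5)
Your proposal is correct and follows essentially the same route as the paper: the first part passes to characteristic form $D_i\tilde K^i=Q_a\Delta^a$ and applies $E_v$ (the paper simply starts from a conservation law already written as $D_iP^i=\gamma^a\Delta^a$, so your explicit appeal to normality and total nondegeneracy to justify that reduction is a welcome but minor addition), and the second part is the same application of the Noether operator identity to $\beta^a\Delta^a$, with the left-hand side and the $\alpha^vE_v(\beta^a\Delta^a)$ term both vanishing on solutions. Your closing remark on possible triviality matches the paper's own caveat following its proof.
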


\begin{proof}
Assume that the system \eqref{sys} has some local conservation law
\begin{align}
D_{i}P^{i}= \gamma^a \Delta^a,
\end{align}
where $P^i$ and $\gamma^a$ are differential functions.
Then
\begin{equation}								\label{Rosenhaus:equation29}	
E_v (\gamma^a \Delta^a)= 0, \qquad a =1, \dots n, \quad v=1, \dots, q, \quad i=1, \dots, p,
\end{equation} 
which shows that the system is quasi-Noether.

\medskip

\noindent Suppose now that the quasi-Noether system \eqref{quasi-Noether} possesses a continuous symmetry with the infinitesimal operator 
$ X_\alpha $ \eqref{Rosenhaus:equation4}. Rewriting 
condition \eqref{quasi-Noether} in the form
\begin{align}\label{quasiN}
E_v (\beta^a\Delta^a) = \Gamma^{va}\Delta^a \doteq 0, \qquad v = 1, \dots q,
\end{align}
where the sum is taken over $1\le a \le n$, $\beta^a$ are differential functions, and $\Gamma^{va}=\Gamma^{vaJ}D_J$ are differential operators. 
Multiplying the equation \eqref{quasiN} by $\alpha^v$ and taking a sum over $v$, we obtain
\begin{align}\label{quasiNN}
\alpha^v E_v (\beta^a\Delta^a) = \alpha^v \Gamma^{va}\Delta^a.
\end{align}
Applying the Noether identity \eqref{Rosenhaus:equation7} in the LHS, we obtain  
\begin{equation}								\label{Rosenhaus:equation30}	
\left(X_\alpha -{D_{i} R^i} \right) (\beta^a\Delta^a)= \alpha^v \Gamma^{va}\Delta^a, 
\qquad a =1, \dots n, \quad v=1, \dots, q, \quad i=1, \dots, p.
\end{equation}
Since $ X_\alpha $ is a symmetry operator for the system $\Delta^a$, we obtain
\begin{equation}								\label{Rosenhaus:equation31}	
{D_{i} R^i(\beta^a\Delta^a)} \doteq 0,
\qquad a =1, \dots n, \quad v=1, \dots, q, \quad i=1, \dots, p,
\end{equation}
and therefore, a quasi-Noether system with continuous symmetries %%%%%
in general, possesses local conservation law \eqref{Rosenhaus:equation31}.
\end{proof}

\medskip
%%%%%
Note that in some cases the conservation law \eqref{Rosenhaus:equation31} may be trivial.
\smallskip

Note %%%%%
also that condition \eqref{quasi-Noether} allows one to find and classify quasi-Noether equations of a certain type.  For example, it can be shown that equations of the following class:
\begin{align}
u_t=u_{xx}+u_x^n,\quad n=0,1,2,3,...
\end{align}
possess conservation law(s) only for $n=0,1,2$.  Correspondingly, for $n\ge 3$, such equations are not quasi-Noether and, hence, do not admit any conservation laws.

\subsection{Classes of Quasi-Noether equations}
The quasi-Noether class 
of equations for which it is possible to establish a Noether type 
correspondence between symmetries and conservation laws is quite large, and covers practically all interesting differential systems of mathematical physics, and all 
systems possessing conservation laws. Many examples of equations of the class were given in \cite{Rosenhaus94}, \cite{Rosenhaus96}, 
and \cite{Rosenhaus97}.
%%Quasi-Noether systems are differential systems for which it is possible to associate local conservation laws to symmetries based on the Noether operator identity, see \cite{Rosenhaus15}. All systems possessing conservation laws are quasi-Noether.
Quasi-Noether systems include all differential systems in the form of conservation laws, e.g. KdV, mKdV, Boussinesq, Kadomtsev-Petviashvili equations, nonlinear wave and heat equations, Euler equations, and Navier-Stokes equations; as well as the homogeneous Monge-Ampere equation, and its multi-dimensional analogue, see \cite{Rosenhaus94}. In \cite{Rosenhaus15}, the approach based on the Noether identity was applied to quasi-Noether systems possessing infinite symmetries involving arbitrary functions of all independent variables, in order to generate an extension of the Second Noether theorem for systems that may not have well-defined Lagrangian functions.

\medskip
\subsubsection{Evolution equations}
Consider evolution equations of the form %%%%parabolic type:
\begin{align}
u_t=A(u,u_x)u_{xx}+B(u,u_x).
\end{align}
Assuming $\beta=\beta(t,x,u,u_x,u_{xx},\dots,u_{nx})$, and requiring our equation to be quasi-Noether \eqref{quasi-Noether} we obtain the following condition %%%%stipulates that the following equation hold:
\begin{align}
E_u[\beta u_t-\beta(Au_{xx}+B)]\,\dot{=}\,0.
\end{align}
It can be shown that %%%%$n=0$, so 
$\beta=\beta(t,x,u)$.  Moreover, it can be shown that for this system, the above equality holds for all $u$ $(=)$ rather than strictly solutions $(\,\dot{=}\,)$.

\smallskip
We obtain the following classes of equations:
\medskip

1.
\begin{align}
\begin{split}
A(u,u_x)&=\alpha H_{u_x}/S'(u),\\
B(u,u_x)&=\alpha (u_x H_u-S(u))/S'(u),\\
\beta(t,u)&=e^{\alpha t}S'(u),
\end{split}
\end{align}
where $H(u,u_x)$ and $S(u)$ are arbitrary functions, and $\alpha$ is a constant.  

In this case the equation $\Delta=u_t-Au_{xx}-B$ is obviously, quasi-Noether since the left hand side turns into total divergence upon multiplication by $\beta$:
\begin{align*}
\beta\Delta=D_t[e^{\alpha t}S(u)]-\alpha D_x[e^{\alpha t}H(u,u_x)].
\end{align*}

\smallskip

2.
\begin{align}
\begin{split}
A(u,u_x)&=(u_xG_{u_x}-G)/u_x^2,\\
B(u,u_x)&=G_u+(b/u_x+c'(u))G+h+u_x(a+h_u+hc_u)/b,\\
\beta(t,x,u)&=\exp(a t+bx+c(u)),
\end{split}
\end{align}
where $G(u,u_x)$, $c(u)$, and $h(u)$ are arbitrary functions, and $a$ and $b$ are constants.

%%%% Note that $A$ and $B$ are regular functions of $u_x$ if $G(u,u_x)=u_x g(u)+u_x^2$.  If we consider %%%%% ??

a. The special case $G(u,u_x)=u_x^2$, $c(u)=u$, $h(u)=0$, and $a=-b^2$, leads to the following equation:
\begin{align} \label{A1}
u_t-u_{xx}-u_x^2=0.
\end{align}
Multiplication by $\beta=\exp(bx-b^2t+u)$ turns the LFS of equation \eqref{A1} into a total divergence:
\begin{align*}
D_t(\beta u)-D_x[\beta(u_x-b)]=0.
\end{align*}
\smallskip
%%%%%
b. Choosing $a=-b^2, \: c'(u)=0, \: G(u,u_x)= {u_x}^2+pu_x, \: p=const, \:h(u)=1-bp$ we obtain heat equation, $A=1, \: B=1$ 
\begin{align}
u_t=u_{xx}
\end{align}
\smallskip
c. Choosing $a=-b^2, \: c'(u)=0, \: G(u,u_x)= {u_x}^2+u^2u_x/2, \:h(u)=-bu^2/2 $ we obtain Burgers equation, $A=1, \: B=-uu_x$ 
\begin{align}
u_t=u_{xx}-uu_x.
\end{align}
\smallskip
d. Choosing $a=-b^2=-1, \: c'(u)=0, \: G(u,u_x)= {u_x}^2+(u-u^2)u_x, \:h(u)=0 $ we obtain Fisher equation, $A=1, \: B=u-u^2$ 
\begin{align}
u_t=u_{xx}+u(1-u).
\end{align}

3. 
\begin{align}
\begin{split}
A(u,u_x)&=a,\\
B(u,u_x)&=au_x^2(\Phi_{uu}+b\Phi_u^2)/\Phi_u+1/\Phi_u+\epsilon u_x,\\
\beta(t,x,u)&=v(t,x)\exp(-bt+b\Phi),
\end{split}
\end{align}
where $\Phi(u)$ is an arbitrary function, $a$ and $b$ are constants, and $v(x,t)$ is a solution of the following equation:
\begin{align*} \label{A2}
v_t-bv_x+av_{xx}=0.
\end{align*}
\smallskip
4.
\begin{align}
\begin{split}
A(u,u_x)&=a,\\
B(u,u_x)&=au_x^2\ell_u+\epsilon u_x,\\
\beta(t,x,u)&=v(x,t)\,e^\ell,
\end{split}
\end{align}
where $\ell(u)$ is an arbitrary function, and $v$ satisfies the same linear equation \eqref{A2}.
\smallskip

5.
\begin{align}
\begin{split}
A(u,u_x)&=G_u/\Phi_u,\\
B(u,u_x)&=u_x^2(G_{uu}+\tau G_u^2+\epsilon G_u\Phi_u)/\Phi_u+u_x(\delta+2\sigma G_u/\Phi_u)+1/\Phi_u,\\
\beta(t,x,u)&=\exp[-\epsilon t+\sigma(x+\delta t)+\tau G+\epsilon \Phi]\cosh[(x+\delta t-\mu)\sqrt{\sigma^2-\tau}]\Phi_u,
\end{split}
\end{align}
where $\Phi(u)$ and $G(u)$ are arbitrary functions, and $\delta,\epsilon,\mu,\sigma,$ and $\tau$ are arbitrary constants.
\medskip

We do not pursue the remaining cases, since these necessarily involve dependence on $1/u_x$.  However, the above analysis gives insight into, for instance, equations of the following class:
\begin{align}
u_t=u_{xx}+u_x^n,\quad n=0,1,2,3,...
\end{align}
We have shown that such an equation possesses a conservation law only for $n=0,1,2$.  For $n\ge 3$, such equations are not quasi-Noether and, hence, do not admit conservation laws.

\medskip

\subsubsection{Quasi-linear equations}

Consider now quasi-linear equations of the form 
\begin{align} \label{QL}
\Delta=u_{tt}-A(u,u_x)u_{xx}-C(u,u_x)=0.
\end{align}
We require our equation to be quasi-Noether \eqref{quasi-Noether} 
\begin{align}\label{qNcond}
E_u[\beta (u_{tt}-Au_{xx}-C)]\,\dot{=}\,0.
\end{align}
Using the identity \eqref{Rosenhaus:equation22}
\begin{equation} 									
E_u (\beta\Delta)=(\mathrm D^{*}_{\Delta}) \beta + (\mathrm D^{*}_{\beta}) \Delta,
\end{equation}
we obtain 
\begin{equation}\label{cond}									
E_u (\beta\Delta) \doteq (\mathrm D^{*}_{\Delta}) \beta \doteq 0.
\end{equation}
For equations \eqref{QL} condition \eqref{cond} takes a form
\begin{align}\label{det_eq}
{D_{t}}^2\beta -{D_{x}}^2(\beta A) +D_x{(\beta u_{xx} A_{u_x})} &+D_t{(\beta u_{xx} A_{u_t})}+\\
D_x{(\beta C_{u_x})}+D_t{(\beta C_{u_t})} &-\beta C_u -\beta A_u u_{xx}  \nonumber =0.
\end{align}
Solving \eqref{det_eq} for $\beta=\beta(x,t,u)$, we obtain
\begin{align}
\begin{split}
A &=u_t L\;+\;M, \\
C &=u_t R\;+\;P\;+\;f(u,u_t),
\end{split}
\end{align}
where
\begin{align}
\begin{split}
R(u,u_x) =\int \left[\frac{\beta_x}{\beta}L+u_x L_u\right]\: du_x\\ 
P(u,u_x)\; =\; \int K(u,u_x)\: du_x,
\end{split}
\end{align}
and $ L=L(u,\; u_x),$  $K=K(u,\,u_x), \: f=f(u,u_t),\: \beta=\beta(x,t,u)$ are functions to be determined.

\smallskip
We get the following solutions:

\smallskip

1. 
\begin{align} \label{Class1}
\begin{split}
L &=\frac{1}{m}[2M_u+ lM_{u_x}+ u_x M_{uu_x}-K_{u_x}] \\
R &=\int \left[lL+u_x L_u\right]\: du_x=l^2 M+2 l u_x M_u+{u_x}^2 M_{uu}-l K - u_x K_u \: + \int {K_u du_x}, 
\end{split}
\end{align}
where  $l,m$ are abrbitrary constants,  $M(u,u_x), \: K(u,u_x)$ are arbitrary functions, $\beta = e^{lx+mt}$, and  $f(u,u_t)$ satisfies 
\begin{align}
-f_u+mf_{u_t}+u_t f_{uu_t}=-m^2.
\end{align}

\medskip

2. 
\begin{align}
\begin{split}
R &= L = 0 \\
K &=(l+qu_x) M+u_xM_u +\int [qM+M_{u}] du_x+r(u), 
\end{split}
\end{align}
where  $l,m,q$ are abrbitrary constants, function  $M(u,u_x)$ is arbitrary, $\beta = e^{lx+mt+qu}$, and  
\begin{align} \label{add}
f(u,u_t)=-qu_t^2+s(u)u_t+v(u), \:\:\:\: s(u)=   -m+\frac{qv(u)+v'(u)}{m}.
\end{align}

\smallskip

%%%%%
Examples:
\smallskip

a. If $ M(u,u_x) = c,$ we obtain
\begin{align}
u_{tt}= M u_{xx}+(lM+r(u))u_{x} +qM{u_x}^2 +R(u)+(-q{u_t}^2+su_t+v),
\end{align}
where $s(u)$ satisfies \eqref{add} and functions $r(u)$ and $R(u)$ are arbitrary.
\smallskip
Choosing $q=s=v=0, \: r=-l M$ we obtain
\begin{align} \label{main}
u_{tt}= M u_{xx}+R(u).
\end{align}
The class \eqref{main} includes Liouville equation ($R(u) =ke^{\lambda u}$)
\begin{align}
u_{tt} &= M u_{xx}+ke^{\lambda u},
\end{align}
and Sine-Gordon equation (($R(u) =ksin{\lambda u}$)
\begin{align}
u_{tt} &= M u_{xx}+ksin{\lambda u}.
\end{align}
\smallskip

b. Choosing $ M(u,u_x) =g(u), q=s=v=0, r(u)=-lg(u),$ we obtain
\begin{align} \label{importn}
u_{tt} &= g(u) u_{xx}+g'(u){u_x}^2.
\end{align}
The equation \eqref{importn} is a nonlinear wave equation.

%%%%%
\section{Quasi-Lagrangians}
\label{sec:qLagr}

For $\beta$ a cosymmetry, the quantity $L=\beta\cdot \Delta$ serves an analogous role to a Lagrangian.  In this section, we examine the applications of such a \textit{quasi-Lagrangian} $L$ in more detail.  We show %%%%%
that equations with a quasi-Lagrangian have a Noether correspondence on a subspace.  We then demonstrate the incompleteness of the symmetry/cosymmetry correspondence with an example.  We conclude with a comparison of the invariant submanifolds of quasi-Noether systems with those of Hamiltonian systems.

\subsection{A Noether correspondence}
Let us introduce the operator $T$ depending on $\beta$ such that $E(\beta\cdot\Delta)=T\cdot\Delta$.  If $\mbb D_\Delta$ is an endomorphism (same number of equations as dependent variables), then $T=\mbb D_\beta^*-\mbb D_\beta$ by \eqref{Rosenhaus:equation20}.

\medskip
We say an operator $T$ is nondegenerate if its restriction to $\Delta=0$ is not the zero operator.  If $\beta$ is a characteristic for a conservation law (generator), then $E(\beta\cdot\Delta)=0$, so $T=0$ (degenerate).  Thus $\beta$, in this section, will be a cosymmetry which is \textit{not} a characteristic.
%As usual, we assume $\beta[u]$ equals its restriction to $\Delta=0$ for $\Delta$ normal and totally nondegenerate.

\medskip
Recall that a functional $\ms L[u]=\int L[u]dx$ is an equivalence class modulo total divergences.  We say a smooth functional $\ms L[u]=\int L[u]dx$ is nondegenerate if it is nonzero and some representative $L[u]$ (hence all of them) does not vanish quadratically with $\Delta$.  We identify nondegenerate functionals with their affine terms: $L\sim L|_{\Delta=0}+\beta^I\cdot D_I\Delta$.  Mod a divergence, we then have $L\sim L_0+\beta\cdot\Delta$, where $\beta=(-D)_I\beta^I$ comes from integration by parts.  We will abuse notation in the below and identify $\ms L$ with $L$.  As a non-example, consider, for example, a second order scalar nonlinear PDE $\Delta=\Delta^1(u_{ij})$.  Then for $L=\Delta^2$, we have $E(L)=D_{ij}(A^{ij}\Delta)$, where $A_{ij}=2\pd \Delta/\pd u_{ij}$.

\medskip
The concept of \textit{sub-symmetry} was introduced in %%%%%
\cite{RS2018}. We say $X_\al$ generates a \textit{sub-symmetry} of $\Delta$ if $X_\al(T\cdot\Delta)|_{\Delta=0}=0$ for some nondegenerate $T$.  In the special case that $X_\al(T\cdot\Delta)|_{T\cdot\Delta=0}=0$, clearly $X_\al$ is an ``ordinary" symmetry of a sub-system $T\cdot\Delta$; the more general definition will not be needed here.  We note that in \cite{RS2019}, sub-symmetries of sub-systems generated by co-symmetries yield conservation laws through the mechanism in the previous section.  In the present work, we find a new appearance of sub-symmetries.

\medskip
We now present an extension of Noether theorem.  Every variational symmetry $X_\al$ of a quasi-Lagrangian $L$ corresponds to a conservation law characteristic $T^*\al$ in the image of $T^*$.  The variational symmetry is only a sub-symmetry of the PDE $\Delta$.  If $L=L_0+\beta\cdot\Delta$ and cosymmetry $\beta$ is also in the image of $T^*$, then a family of equations $\Delta+ker[T]$ shares variational symmetries, hence conservation laws generated by the quasi-Lagrangian.
\begin{thm}\label{thm:qL}
Let $\Delta$ be a normal, totally nondegenerate PDE system, and suppose there exist smooth and nondegenerate operator $T$ and functional $L$ such that $E(L)=T\cdot\Delta$.

\medskip
1. $X_\al L=div$, if and only if $(T^*\al)\cdot\Delta=div$.  

2. If $X_\al L=div$, then $\al$ is a sub-symmetry of $\Delta$.

\medskip
\noindent
Let $L=L_0+\beta\cdot\Delta$.  Define $\Delta_f=\Delta+f$ and $L_f=L_0+\beta\cdot \Delta_f$ for $f\in ker[T]$.

3. If $\beta\in im[T^*]$, $X_\al L_f=div$ if and only if $X_\al L=div$.

%4. If $X_\al L=div$ and $X_\al\Delta|_{\Delta=0}=0$, then the Green-Lagrange-Noether identity applied to $(\al,\beta)$ yields $T^*\al$ plus another multiplier.
\end{thm}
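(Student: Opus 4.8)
The plan is to establish the three claims by repeated use of the Noether operator identity \eqref{Rosenhaus:equation7}, the Lagrange identity \eqref{lag}, and the adjoint-operator bookkeeping already set up in the excerpt. For part 1, I would apply the Noether identity to the functional $L$: since $E(L)=T\cdot\Delta$, we get $X_\al L = \al\cdot E(L) + D_iR^i(L) = \al\cdot(T\cdot\Delta) + \mathrm{div}$. Integrating by parts to move all derivatives off $\al$ and onto $\Delta$ (exactly the computation done right after \eqref{Rosenhaus:equation14} in the ``nontriviality of a conservation law'' paragraph), $\al\cdot(T\cdot\Delta) = \Delta\cdot(T^*\al) + \mathrm{div} = (T^*\al)\cdot\Delta + \mathrm{div}$. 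Hence $X_\al L - (T^*\al)\cdot\Delta = \mathrm{div}$ identically (not just on solutions), which gives both directions of the equivalence at once. The one subtlety is that $L$ is only defined as an equivalence class mod divergences, but $X_\al$ of a divergence is again a divergence, so the statement $X_\al L = \mathrm{div}$ is well-defined on the class, and the identity above is representative-independent.

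For part 2, suppose $X_\al L = \mathrm{div}$. By part 1, $(T^*\al)\cdot\Delta = \mathrm{div}$, i.e.\ $D_i\Pi^i = (T^*\al)\cdot\Delta$ for some fluxes $\Pi^i$. Apply $X_\al$ to both sides. On the right, $X_\al\big((T^*\al)\cdot\Delta\big) = \big(X_\al(T^*\al)\big)\cdot\Delta + (T^*\al)\cdot(X_\al\Delta)$; on solutions both terms involve $\Delta$ or $X_\al\Delta = \mathbb D_\Delta\al$, which need not vanish. The cleaner route is to recall from part 1 that $X_\al L = (T^*\al)\cdot\Delta + \mathrm{div}$ holds identically, differentiate/rearrange to isolate $(T^*\al)\cdot\Delta = X_\al L - D_i(\text{flux})$, and then observe that $T\cdot\Delta = E(L)$, so $X_\al(T\cdot\Delta) = X_\al E(L) = E(X_\al L) - \mathbb D^*_\al(E(L))$ by the standard commutator identity \cite{Olver} (the one already invoked in the commented-out section). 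Since $X_\al L = \mathrm{div}$ and $E$ annihilates divergences, $E(X_\al L) = 0$; and $\mathbb D^*_\al(E(L)) = \mathbb D^*_\al(T\cdot\Delta)$, which vanishes on $\Delta = 0$. Therefore $X_\al(T\cdot\Delta)|_{\Delta=0} = 0$, which is exactly the definition of $\al$ generating a sub-symmetry of $\Delta$ (with the nondegenerate operator $T$). I expect this commutator step to be the main obstacle, mostly in making sure the nondegeneracy hypothesis on $T$ is what licenses calling it a sub-symmetry rather than a vacuous statement.

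For part 3, the key observation is that $E$ is linear and annihilates the added term in the ``obvious'' way: $E(L_f) = E(L_0 + \beta\cdot\Delta_f) = E(L_0 + \beta\cdot\Delta) + E(\beta\cdot f) = E(L) + E(\beta\cdot f)$. Now $E(\beta\cdot f) = \mathbb D_f^*\beta + \mathbb D_\beta^* f$ by the product rule \eqref{Rosenhaus:equation22}; since $f \in \ker[T]$ and — using $\beta \in \mathrm{im}[T^*]$, say $\beta = T^*\gamma$ — one computes $\mathbb D_f^*\beta$ pairs against $\gamma$ through $T$, i.e.\ $\langle \mathbb D_f^*\beta, \cdot\rangle$ reduces mod divergences to something containing $T\cdot f = 0$. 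More directly: $\beta\cdot f = (T^*\gamma)\cdot f = \gamma\cdot(T\cdot f) + \mathrm{div} = \mathrm{div}$ since $f\in\ker[T]$, so $E(\beta\cdot f) = 0$ and hence $E(L_f) = E(L) = T\cdot\Delta$. Moreover $L_f = L + \beta\cdot f = L + \mathrm{div}$, so $L_f$ and $L$ represent the same functional, whence $X_\al L_f = \mathrm{div} \iff X_\al L = \mathrm{div}$ trivially. I would present it via this ``$\beta\cdot f$ is a divergence'' route, since it is the shortest and makes transparent why both hypotheses ($f\in\ker[T]$ and $\beta\in\mathrm{im}[T^*]$) are needed; the only care required is that ``$L_f$ a divergence'' must be read mod the ambient class, consistent with the convention fixed before the theorem.
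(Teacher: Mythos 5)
Your proposal is correct and follows essentially the same route as the paper's proof: part 1 via the Noether identity plus integration by parts to move $T$ onto $\al$, part 2 via the standard fact that variational symmetries are symmetries of the Euler--Lagrange system $E(L)=T\cdot\Delta$ (which you usefully make explicit with the commutator identity $E(X_\al L)=X_\al E(L)+\mathbb D_\al^*E(L)$, where the paper simply cites it as well known), and part 3 via $\beta\cdot f=(T^*\gamma)\cdot f=\gamma\cdot(Tf)+\mathrm{div}=\mathrm{div}$. No gaps.
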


\begin{proof}
1. The Noether theorem states $X_\al L=div$ if and only if $\al\cdot E(L)=div$, so $\al\cdot (T\cdot\Delta)=div$.  If we integrate by parts, we obtain $(T^*\al)\cdot\Delta=div$, as desired.

2. It is well known $X_\al L=div$ implies $X_\al$ is a symmetry of $E(L)$.  Since $E(L)=T\cdot\Delta$ is a sub-system of a prolongation of $\Delta$, this means $X_\al$ is only a sub-symmetry of $\Delta$.

3. Since $\beta=T^*\gamma$ for some smooth $\gamma$, we have $\beta\cdot f=(T^*\gamma)\cdot f=\gamma\cdot(Tf)+div=div$, so we conclude $X_\al L_f=X_\al L+div$.
\end{proof}

\begin{rem}
Part 3 illustrates that a variational symmetry $\al$ of $L$ need not correspond to a symmetry of $\Delta$, since if $X_\al\Delta_f=A\Delta_f$, then $X_\al f=Af$ need not be true for all $f\in ker[T]$.  We will demonstrate this failure in Section \ref{Ex1sec}.

\medskip
Therefore, the Green-Lagrange-Noether approach is incomplete: it requires vector field $X_\al$ to be a symmetry of $\Delta$, while Part 2 indicates we must instead consider sub-symmetries.  The incompleteness was partially demonstrated in \cite{Anco2017} in the case $\beta$ is a characteristic of a conservation law, but the case where cosymmetry $\beta$ is not a characteristic was not indicated.

\medskip
Note that Ibragimov's approach \cite{Ibragimov2011} is distinct; it uses variational symmetries an extended, Lagrangian system obtained by, essentially, treating the $\beta$'s as \textit{dependent variables}.  Such variational symmetries \textit{must} be symmetries of $\Delta$, which is one reason why our result is different.  Another follows from a direct computation of the characteristic; there is an additional term due to the symmetric action of $X_\al$ on $L$, which in our approach is simply a divergence not contributing to the characteristic.
\end{rem}

%%%%%
\medskip
\begin{rem}
The main difference of our extension is that the vector field $X_\al$ is only a sub-symmetry of the PDE $\Delta$, not a true symmetry, since we are considering variational symmetries of $L$.  This highlights the incompleteness of the previous approach, which requires using ordinary symmetries.  We presented an example where all
lower conservation laws arise in this way, many only from sub-symmetries.
\end{rem}

\begin{rem}
Upon restricting said symmetries of $\Delta$ to variational symmetries of $L$, the results are equivalent, of course.  Let $A\Delta=X_\al\Delta$. The Green-Lagrange-Noether approach yields the following conservation law:
\eqal{
(T^*\al -X_\al\beta-A^*\beta)\Delta=div.
}
But in fact, $(X_\al\beta+A^*\beta)=X_\al L+div=div$ since $\al$ is a variational symmetry. Thus, 
\eqal{
(T^*\al-\mu)\Delta=div
}
for some generating function $\mu$.  This implies $T^*\al$ is also a generating function.
\end{rem}

\begin{rem}
We note that $L_0=L|_{\Delta=0}$ is nonzero (mod divergence) in general.  For example, if $E(L)=\Delta$ for $L=uv_t-\lambda[u,v]$ and the two-component evolution system
\eqal{
E_u(L)=v_t-E_u\lambda=0,\\
E_v(L)=-u_t-E_v\lambda=0,
}
then unless $\lambda$ is of a special form, $L$ need not vanish on $\Delta=0$.
\end{rem}

\subsection{An example} %%A detailed example}
\label{Ex1sec}
It is well known that the second order Burgers equation
\eqal{
u_t=uu_x+u_{xx}=D_x(\frac{1}{2}u^2+u_x)
}
has exactly one conservation law and no cosymmetries other than $\beta=1$.  It thus has no quasi-Lagrangians, and its conservation law does not arise from such.  We next consider the opposite situation.

\medskip
Consider a third order evolution equation $u_t=F[u]$ of the form
\eqal{\label{Ex1}
u_t=F[u]:=\frac{3u_{xx}u_{xxx}}{u_x}-\frac{u_{xx}^3}{u_x^2}+f(t)=\frac{1}{u_{xx}}D_x(\frac{u_{xx}^3}{u_x}+f(t)u_x).
}
%Here, $u_k=\pd^k u/\pd x^k$.  
Even if $f(t)=0$, this cannot be written as a Hamiltonian equation, since although $u_t=\ms D\cdot E(H)$ for $H=-\frac{1}{2}u_x^2$ and the skew-adjoint operator
$$
\ms D=2\frac{u_{xx}}{u_x}D_x+\frac{u_{xxx}}{u_x}-\frac{u_{xx}^2}{u_x^2},
$$
it can be shown that $\ms D$ does not verify the Jacobi identity.  We do, however, see a quasi-Lagrangian structure:
\eqal{\label{Ex1quasi}
E(L):=E(-\frac{1}{2}u_x(u_t-F))=D_x(u_t-F)=:T\Delta.
}
%The keys to relating conservation laws to symmetries are the cosymmetries:
We present the cosymmetries of order $\beta(t,x,u,\dots,u_5)$:
\eqal{\label{Ex1adj}
u_x,\quad -u_{xx},\quad -D_x(xu_x+tF),\quad -D_xF.
}

\medskip
The even-order cosymmetries in \eqref{Ex1adj} are conservation law generators which arise from variational symmetries of the quasi-Lagrangian
$$
L:=-\frac{1}{2}u_x(u_t-F[u]).
$$
Indeed, if we rewrite them using \eqref{Ex1},
\eqal{\label{Ex1sym}
-D_xu_x,\quad -D_x(xu_x+tu_t),\quad -D_xu_t,
}
then we see they generate a translation $x\to x+\ep$, a scaling $(x,t)\to e^\ep(x,t)$, and a translation $t\to t+\ep$, respectively.  These (three) cosymmetries comprise the 
%complete set of 
(order 4) conservation law generators for equation \eqref{Ex1}.

\medskip
There are two differences from the classical Noether theorem:

1. The variational symmetry $u\to u+\ep$ does not lead to a conservation law since for $\al=1$, we have $T^*(\al)=-D_x(1)=0$.

2. Time translation is \textit{not} a symmetry of $\Delta=0$ unless $f'(t)=0$.  Nevertheless, since 
$$
\frac{1}{2}u_xf(t)
$$ 
is a divergence expression, we see $\al=u_t$ generates a variational symmetry of $L$, and then a conservation law of $\Delta$.

%Note that adjoint symmetry $u_x=D_xu$, itself, does not arise from a (variational) symmetry of $L$.  It is certainly not a conservation law multiplier, since it is not even self-adjoint, so cannot be written as $u_x=E(f)$.  Note also that 

\medskip

\medskip
\begin{rem}
We present a comparison with the quasi-Noether/nonlinear self-adjointness/Green-Lagrange approach.  Suppose the symmetry condition $X_\al \Delta=A\Delta$, and integrate by parts:
\eqal{\label{Xsym}
X_\al L=\beta X_\al\Delta+(X_\al\beta)\Delta=\beta A\Delta+(X_\al\beta)\Delta=(A^*\beta+X_\al\beta)\Delta+div.
}
Also recall the Noether identity and quasi-Lagrangian structure \eqref{Ex1quasi}
\eqal{\label{XNoe}
X_\al L=\al E(L)+div=\al D_x \Delta+div=(-D_x\al)\Delta+div.
}
Combining these two formulas gives the characteristic for the conservation law generated using this approach:
\eqal{
B(\al):=-D_x\al-X_\al\beta-A^*\beta.
}
The only symmetry in \eqref{Ex1sym} is $\al=u_x$.  In this case, it can be verified that
$$
-X_\al\beta=\frac{1}{2}D_x\al,\quad X_\al\Delta=D_x\Delta,\quad -A^*\beta=-\frac{1}{2}D_x\al,
$$
so that $B(\al)=-D_x\al=T^*(\al)$.  Recall the characteristic generated by a variational symmetry is also $-D_x\al$.  Thus, the two approaches agree when $\al$ is a symmetry of \eqref{Ex1}.
\end{rem}

\begin{rem}
Observe that $\al=u_t$ and $\al=tu_t+xu_x$ do not generate symmetries of \eqref{Ex1} if $f'(t)\neq 0$.  However, a simple modification yields the desired conservation law.  

Combining \eqref{Xsym} with \eqref{XNoe} without the symmetry condition $X\Delta\dot= 0$ yields
\eqal{
(-D_x\al-X_\al\beta)\Delta-\beta\,X_\al\Delta=div.
}
For $\al=u_t$, it can be shown that
\eqal{
X_\al\Delta=D_t\Delta-f'(t)=:R\Delta-f'(t).
}
Therefore, after an integration by parts, we recover the conservation law plus an error term:
\eqal{
(-D_x\al-X_\al\beta-A^*\beta)\Delta+\beta f'(t)=div.
}
In fact, the error is a divergence: $f'(t)\beta=D_x(-\frac{1}{2}uf'(t))$, so we derive an analogous conservation law.  Again, the characteristic is $-D_x\al$.  

\medskip
Let us apply the same analysis to $\al=tu_t+xu_x$.  We have
$$
X_\al\Delta=-\Delta+tD_t\Delta+xD_x\Delta+(1-tD_t)f(t)=:A\Delta+f(t)-tf'(t).
$$
As before, we obtain
$$
(-D_x\al-X_\al\beta-A^*\beta)\Delta+\beta(f(t)-tf'(t))=div.
$$
The second expression on the left hand side is a divergence, so we obtain a conservation law.  Using $-A^*\beta=-\frac{1}{2}D_x\al$, we conclude the characteristic is again $-D_x\al$.
\end{rem}

\subsection{Critical points and symmetries}
We will show that quasi-Lagrangians are ``opposite" in a sense to Hamiltonians.  Let $\Delta=u_t-P[u]$ be an evolution system with evolutionary operator
\eqal{
D_t=\pd_t+u^a_{It}\pd_{u^a_I}\deq\pd_t+D_IP^a\pd_{u^a_I}=\pd_t+X_P.
}
Each of these objects satisfies special determining equations: $\al$ symmetry, $\beta$ cosymmetry.  Symmetry $\al$ satisfies the equation
\eqal{
D_t\al=\mbb D_P\al.
}
Cosymmetry $\beta$ satisfies the equation
\eqal{
D_t\beta=-\mbb D_P^*\beta.
}
These equations imply that \textit{symmetry invariance} $\al=0$ and \textit{cosymmetry invariance} $\beta=0$ are compatible with $\Delta=0$.  By \textit{compatible}, we mean if $u_0(x)$ solves $\gamma(t_0,x,u_0,\dots)=0$, then for time evolution $u(t)=U(t;t_0)u_0$, we have $\gamma(t,x,u(t),\dots)=0$.  This follows for $P,u_0$ analytic by a (local) power series expansion, since the invariance conditions imply $D_t^k\gamma|_{\gamma=0}=0$ for all $k\ge 0$.  The same holds for systems in Cauchy-Kovalevskaya form, which can be rewritten as evolution systems.

\begin{rem}
It was observed in \cite{Anco2017} that a co-symmetry $\beta$ induces an invariant one-form $\beta[u]\cdot du$ with respect to time evolution, in an analogous way that a symmetry $\al$ induces an invariant vector field $\alpha[u]\cdot\pd/\pd u$.  The determining equations are simply vanishing Lie derivatives of these tensors.  The critical sets of these tensors are therefore time invariant, which is a more geometric way to describe the compatibility of these invariance conditions.
\end{rem}

Let us note an interesting phenomenon. If $E(L)=T\Delta$ is quasi-Lagrangian and $X_\al$ is a variational symmetry, then $X_\al$ is a symmetry of $T\cdot\Delta$, but $T^*\al$ generates a conservation law of $\Delta$.  We thus find a ``duality" relation between the existence of compatible systems.

\begin{prop}
If $E(L)=T\Delta$ and $X_\al L=div$, then both $\{T^*\al=0,\Delta=0\}$ and $\{\al=0,T\Delta=0\}$ are compatible systems.
\end{prop}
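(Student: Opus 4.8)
The plan is to read off each of the two compatibility statements from one half of Theorem~\ref{thm:qL}, combined with the invariance facts for evolution systems recorded just above. Write $\Delta=u_t-P[u]$ throughout.

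First I would treat $\{T^{*}\al=0,\ \Delta=0\}$. By Theorem~\ref{thm:qL}(1) the hypothesis $X_\al L=div$ is equivalent to $(T^{*}\al)\cdot\Delta=div$, so $\be:=T^{*}\al$ is a conservation-law multiplier of $\Delta$. Applying $E$ to $\be\cdot\Delta=div$ and using the product rule \eqref{Rosenhaus:equation22} gives $0=E(\be\cdot\Delta)=\mbb D^{*}_{\Delta}\be+\mbb D^{*}_{\be}\Delta$, hence $\mbb D^{*}_{\Delta}\be\doteq0$; in evolution form this is exactly the cosymmetry determining equation $D_t\be=-\mbb D_P^{*}\be$ on solutions. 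So $\be$ is a cosymmetry, and the compatibility of cosymmetry invariance with $\Delta=0$ established above --- the determining equation forces $D_t^{k}\be|_{\be=0}=0$ for all $k$, so the power-series / Cauchy--Kovalevskaya argument applies --- yields that $\{T^{*}\al=0,\ \Delta=0\}$ is a compatible system.

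Next I would treat $\{\al=0,\ T\Delta=0\}$. From the proof of Theorem~\ref{thm:qL}(2), a variational symmetry of $L$ is a genuine symmetry of $E(L)=T\cdot\Delta$: using the standard commutation identity $X_\al E(L)=E(X_\al L)-\mbb D^{*}_{\al}E(L)$ together with $E(div)=0$, one gets $X_\al(T\Delta)=-\mbb D^{*}_{\al}(T\Delta)$, which vanishes on $T\Delta=0$. It then remains to run the invariance argument with $T\Delta=0$ in place of the evolution equation: the relation $X_\al(T\Delta)=-\mbb D^{*}_{\al}(T\Delta)$, together with the fact that the evolutionary field with characteristic $\al=u_t$ is $D_t-\pd_t$ (and the analogous bookkeeping for a general $\al$), expresses $D_t\al$ in terms of $\al$ and the prolonged constraints $T\Delta=0,\ D_t(T\Delta)=0,\dots$, so that $D_t^{k}\al$ vanishes on $\{\al=0\}\cap\{T\Delta=0\}^{(\infty)}$ for every $k$; the same power-series reasoning then gives compatibility of $\{\al=0,\ T\Delta=0\}$.

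The hard part will be this last step: the invariance-compatibility results above are stated for systems in evolution (Cauchy--Kovalevskaya) form, whereas $T\Delta=0$ is only a differential consequence of $u_t=P$ acted on by $T$ and need not be in that normal form --- for instance $T=D_x$ gives the non-normal equation $u_{xt}=D_xP$. Rather than quote the earlier statement verbatim, I would check directly that $T\Delta=0$, once prolonged in $t$, still carries a well-defined evolution of its natural jet coordinates, so that ``$D_t^{k}\al=0$ on the invariant locus'' continues to hold. Everything else --- Theorem~\ref{thm:qL}(1)--(2), the two product-rule identities, and the short cosymmetry computation --- is routine.
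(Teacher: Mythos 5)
Your proposal is correct and takes essentially the same route as the paper, which states this proposition without an explicit proof and relies on exactly the two ingredients you use: Theorem~\ref{thm:qL}(1)--(2) to identify $T^*\al$ as a cosymmetry (conservation-law characteristic) of $\Delta$ and $\al$ as a symmetry of the sub-system $T\Delta=0$, combined with the preceding compatibility discussion for symmetry and cosymmetry invariance. The subtlety you flag --- that $T\Delta=0$ need not be in evolution or Cauchy--Kovalevskaya form, so the power-series compatibility argument requires a separate check --- is a genuine gap that the paper itself leaves unaddressed, and your plan to verify it directly is the right way to close it.
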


\medskip
Next, if cosymmetry $\beta$ also generates a conservation law, we call the equation $\beta=0$ the \textit{critical point condition} for the conserved integral generated by $\beta$.  To justify this terminology, we recall the well known fact that cosymmetry $\beta$ is a characteristic if and only if $\mbb D_\beta^*-\mbb D_\beta=0$ (self-adjointness), i.e. $\beta=E(T)$ for the conserved density $T$ of the conservation law $D_tT+D_iX^i=\beta\cdot\Delta=0$.  It follows that $\beta=0$ is the Euler-Lagrange equation for the (possibly time-dependent) functional $\int T[u]dx$, i.e. the equation for critical points of this conserved integral.  

\medskip
By the discussion for cosymmetries, the critical point condition is compatible with time evolution $u_t=P$.  This is sensible for two reasons: 1. if $u(0)$ is a minimizer of $\int T[u]dx$ for suitable boundary conditions, then the conservation of $\int T$ implies $u(t)$ is also a minimizer at later times.  2. If $\pd_t T=0$, then 
\eqal{
X_PT=D_tT=-D_iX^i=div,
}
so $P$ actually generates a variational symmetry of the functional $\int Tdx$.  It follows that $X_P$ is a symmetry of $E(T)=0$, hence preserves the solution space.

%%%%%
\medskip
\begin{rem}
Although many current papers are devoted to the construction of Lie-type invariant solutions and conservation laws for non-Hamiltonian systems, we are not aware of any which constructed the critical points of these conservation laws.  The possibility for such was raised in \cite{Ma}.
\end{rem}

%%%%%

\begin{ex}[Time-dependent conservation law]
The KdV equation
\eqal{
u_t+uu_x+u_{xxx}=0,
}
has conservation law 
\eqal{
D_tT+D_xX\deq 0
} 
of the form
\eqal{
\pdf{t}\left(xu-\frac{tu^2}{2}\right)+\pdf{x}\left(t\left(\frac{u_x^2}{2}-uu_{xx}-\frac{u^3}{3}\right)+\frac{xu^2}{2}+xu_{xx}-u_x\right)=0,
}
such that $T[u]=xu-\frac{tu^2}{2}$, and $E_1T[u]=x-tu$.  If $\sigma\neq0$, then $u_*(x)=x/\sigma$ is a solution of $E_1T(\sigma,x,u,\dots)=0$, and $u(t,x)=x/t$ is the solution of $\{u_t=P,E_1T=0\}$ which satisfies $u(\sigma,x)=u_*(x)$.
\end{ex}

\begin{ex}[Time-dependent evolution]
The generalized KdV equation \cite{AncoMaria}
\eqal{\label{tkdv}
u_t+f(t,u)u_x+u_{xxx}=0,
}
where $f(t,u)=at^{-1/3}u+bu+cu^2$ for $a,b,c$ constant, has explicit time dependence if $a\neq 0$.  It has a conservation law $D_tT+D_xX\deq0$ with conserved density
\eqal{
T=\frac{1}{2}ctu_x^2-\frac{1}{12}t(cu^2+bu)^2+\frac{1}{6}x(cu^2+bu)-\frac{1}{2}at^{2/3}(\frac{1}{3}cu^3+\frac{1}{4}bu^2),
}
and characteristic (note that \cite{AncoMaria} has a typo in the $u_{xx}$ term)
\eqal{
E_uT=-ctu_{xx}-\frac{1}{6}t(2c^2u^3+3bcu^2+b^2u)-\frac{1}{4}at^{2/3}(2cu^2+bu)+\frac{1}{6}x(2cu+b).
}
Let show the system $\{u_t+fu_x+u_{xxx}=0,E_uT=0\}$ is compatible.  Suppose first that $c=0$.  Then the solution of $E_uT=0$ is
\eqal{
u(t,x)=\frac{2x}{(3a+2bt^{1/3})t^{2/3}},
}
which solves \eqref{tkdv} for $c=0$.  Suppose now that $c\neq 0$.  If we solve $E_uT=0$ for $u_{xx}$, then it is easy to show that $(u_{xx})_t=(u_t)_{xx}$, or that \eqref{tkdv} is consistent with $E_uT=0$.  Alternatively, substituting $E_tT=0$ into \eqref{tkdv} gives a first order PDE, % $12ctu_t-(3abt^{2/3}+2b^2t-4cx)u_x-4cu-2b=0$, 
which has solution
\eqal{
u(t,x)=-\frac{b}{2c}+t^{-1/3}g\bigl(\xi\bigr),\qquad \xi=t^{-1/3}x+(4c)^{-1}(3abt^{1/3}+b^2t^{2/3}),
}
where $g$ is an arbitrary function.  Substitution into $E_uT=0$ gives an ODE for $g$.
\end{ex}

%%The possibility of finding solutions using the cosymmetry invariance conditions was proposed for non-Hamiltonian systems in \cite{Ma}.
\begin{ex}[A non-Hamiltonian example]
The nonlinear telegraph system \cite{Bluman2005}
\eqal{\label{telegraph}
v_t+ke^uu_x-e^u=0,\\
u_t-v_x=0,
}
where $k\neq 0$ is a constant, has a conservation law with density (note the typo in \cite{Bluman2005})
\eqal{
T=e^{-x/k}[(tv/2+2x)v-k(uv+te^u)]
}
and characteristic
\eqal{
E_uT&=-ke^{-x/k}(v+te^u),\\
E_vT&=e^{-x/k}(2x+tv-ku).
}
A critical point $(u_*,v_*)$ satisfies the system $E_uT=E_vT=0$, whose solution is
\eqal{
u_*(x,t)=2x/k-W\bigl(\frac{t^2}{k}e^{2x/k}\bigr),\\
v_*(x,t)=-\frac{k}{t}W\bigl(\frac{t^2}{k}e^{2x/k}\bigr),
}
where $W$ is the Lambert W function.  It is straightforward to show that $(u_*,v_*)$ solves \eqref{telegraph}.
\end{ex}

\medskip
Before turning to our next result, we recall Hamiltonian systems, see Chapter 7 in \cite{Olver}.  Suppose 
\eqal{
P=\ms D\cdot E(H)
} 
for some Hamiltonian $H[u]$ and skew-symmetric operator $\ms D$ which verifies the Jacobi identity, in the sense that if $\{\ms P,\ms Q\}=\int E(P)\cdot\ms D\cdot E(Q)dx$ is the Poisson bracket induced by $\ms D$, then $\{,\}$ verifies the Jacobi identity.  Then the Noether relation (Theorem 7.15 in \cite{Olver}) states that every conserved integral $\int Tdx$ yields a symmetry of Hamiltonian form: $\al=\ms D\cdot E(T)$.  In other words, 
\eqal{
\al=\ms D\cdot\beta,
} 
where $\beta=E(T)$ is the associated characteristic.  The converse is not true: if the Hamiltonian vector field generated by $\al=\ms D\cdot E(T)$ is a symmetry, then $\beta=E(T)$ need not be a characteristic.  However, there exists a time-dependent $C[t;u]$ with $C[t_0;.]\in ker\ms D$ for each $t_0$, such that $\beta-C$ generates a conservation law.

\medskip
We now compare this situation to the ``opposite" quasi-Lagrangian case in the following.

\begin{thm}
Let $E(L)=T\cdot\Delta$. If $\gamma=T^*\al$ generates a conservation law, then vector field invariance implies the critical point condition.  If $T$ is a smoothly invertible (skew-symmetric) matrix function, then the conditions are equivalent.

Let $u_t=\ms D\cdot E(H)$.  If $\al=\ms D\cdot \beta$ for $\beta=E(G)$, then the critical point condition implies symmetry invariance.  If $\ms D$ is an invertible (skew-symmetric) matrix function, then the conditions are equivalent.

If $T,\ms D$ commute with $\pd_t$, then their invertibility implies that $T^{-1}\Delta,\ms D^{-1}\Delta$ is Lagrangian, respectively.
\end{thm}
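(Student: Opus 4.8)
The plan is to obtain both ``opposite'' implications from one elementary observation about invariant submanifolds of an evolution system, combined with the two algebraic identities already in hand --- $\gamma=T^*\al$ (part~1 of Theorem~\ref{thm:qL}) and $\al=\ms D\cdot\beta$ (the Hamiltonian Noether relation recalled just above) --- and then to get the Lagrangian assertion by producing a local ``symplectic potential''. Throughout, $\Delta=u_t-P[u]$ is the evolution system fixed in this subsection, and I use freely that $\al=0$, $\beta=0$ and $\gamma=0$ are each compatible with the flow, as established earlier from the determining equations. The elementary observation is: if a differential function $\psi[u]$ vanishes \emph{identically in $x$} along a solution $u(t,\cdot)$, then every total spatial derivative $D_I\psi$ vanishes there too, so any differential operator applied to $\psi$ vanishes along that solution.

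For the quasi-Lagrangian case, let $u$ solve $\Delta=0$ and satisfy vector-field invariance, $\al[u]\equiv0$. Then $\gamma=T^*\al$, being a differential operator applied to $\al$, vanishes along $u$; and since $\gamma$ generates a conservation law, $\gamma=E(\mathcal T)$ is the Euler--Lagrange expression of its conserved density $\mathcal T$ (self-adjointness of $\mbb D_\gamma$), so $\{\gamma=0\}$ is exactly the critical point condition --- hence vector-field invariance implies it. The Hamiltonian case is the mirror image: if $\beta=E(G)=0$ along a solution, the same observation applied to $\al=\ms D\cdot\beta$ gives $\al=0$ there, so now the critical point condition implies symmetry invariance, the implication running the opposite way, as asserted. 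Finally, if $T$ (resp.\ $\ms D$) is a smoothly invertible skew-symmetric matrix function then $T^*=-T$, so $\al=-T^{-1}\gamma$ (resp.\ $\beta=\ms D^{-1}\al$) holds pointwise, and the two conditions become equivalent.

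For the last statement I would first treat the Hamiltonian case. Using $P=\ms D\cdot E(H)$ off solutions, $\ms D^{-1}\Delta=\ms D^{-1}u_t-\ms D^{-1}P=\ms D^{-1}u_t-E(H)$. As $\ms D$ is a skew, invertible Hamiltonian matrix function, its inverse is closed as a $2$-form in $u$, so by the Poincar\'e lemma there is a $t$-independent potential one-form $\theta_a(u)\,du^a$ with $(\ms D^{-1})^{ab}=\pd_{u^a}\theta_b-\pd_{u^b}\theta_a$; then the Euler-operator identity $E(\theta_a u^a_t)=(\mbb D_\theta^*-\mbb D_\theta)u_t=\ms D^{-1}u_t$ gives $\ms D^{-1}\Delta=E(\theta_a u^a_t-H)$, which is Lagrangian. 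The hypothesis $[\pd_t,\ms D]=0$ is what lets $\theta$ be chosen $t$-independent, so the identification survives explicit time dependence. For $T$: the hypothesis $E(L)=T\Delta$ with $\Delta$ in evolution form and $T$ a matrix function forces $L\sim A_a(u)u^a_t-L_0$ modulo a divergence; matching Euler--Lagrange expressions forces $T^{ab}=\pd_{u^a}A_b-\pd_{u^b}A_a$ (of functional-curl form), and invertibility of $T$ (commuting with $\pd_t$) then reproduces the same potential construction for $T^{-1}$, giving the corresponding Lagrangian description.

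The main obstacle is precisely this passage to a potential: showing that the relevant skew matrix function is \emph{closed} as a $2$-form, so that the Poincar\'e lemma applies on a contractible patch of field space. For $\ms D^{-1}$ this closedness is exactly the Jacobi identity for $\ms D$; for $T^{-1}$ one must extract the analogous condition from $E(L)=T\Delta$, and here care is needed --- that identity makes $T$ symplectic (closed), whereas closedness of $T^{-1}$ is an additional requirement, automatic for two-component systems but to be verified (or assumed) in general. Once the potentials exist, what remains is the routine identity $E(\theta_a u^a_t)=(\mbb D_\theta^*-\mbb D_\theta)u_t$ used above. By contrast, parts~1 and~2 are one line each given the elementary observation.
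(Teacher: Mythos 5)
The paper states this theorem with no proof at all --- the theorem is followed immediately by a remark and then the conclusions --- so there is no ``paper's own proof'' to compare against; I can only assess your argument on its own terms. Your treatment of the first two assertions is correct and is surely what the authors intend: if $\al[u]$ vanishes identically in $x$ along a solution then so does every spatial total derivative of it, hence so does $\gamma=T^*\al$; and since $\gamma$ generates a conservation law it equals $E(\mc T)$ for the conserved density, so $\gamma=0$ is exactly the critical point condition. The mirror implication from $\al=\ms D\cdot\beta$, and the pointwise equivalences when $T$ (resp.\ $\ms D$) is an invertible skew-symmetric matrix function (so $T^*=-T$ and $\al=-T^{-1}\gamma$), are equally immediate. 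The Hamiltonian half of the third assertion is also sound: the inverse of an invertible Hamiltonian matrix function is a closed two-form, a local potential $\theta$ exists with $\ms D^{-1}=\mbb D_\theta^*-\mbb D_\theta$, and then $\ms D^{-1}\Delta=E(\theta_au^a_t-H)$; the hypothesis that $\ms D$ commutes with $\pd_t$ is precisely what lets $\theta$ be chosen without explicit $t$-dependence so that $E(\theta_au^a_t)=(\mbb D_\theta^*-\mbb D_\theta)u_t$ applies.

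The gap is in the quasi-Lagrangian half of the third assertion, and you have located it yourself without closing it. From $E(L)=T\Delta$ with $T$ a matrix function and $\Delta=u_t-P$ you correctly deduce that $T$ is exact (a functional curl $T=\mbb D_A^*-\mbb D_A$), hence closed; but closedness of $T$ does \emph{not} yield closedness of $T^{-1}$ --- it yields only that $T^{-1}$ satisfies the Jacobi identity (is a Poisson matrix), which is a genuinely different condition except for two-component systems, exactly as you note. There is also a second obstruction you do not flag: $T^{-1}\Delta=T^{-1}u_t-T^{-1}P$, and even granting a potential for $T^{-1}$ one must still show that $T^{-1}P$ is a variational expression. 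In the Hamiltonian case this is automatic because $\ms D^{-1}P=E(H)$ by hypothesis; here nothing plays that role. So your proof of the claim that $T^{-1}\Delta$ is Lagrangian is incomplete on two counts. It is worth observing that the exact structural analogue of the Hamiltonian statement would be that $T\Delta$ is Lagrangian --- which is the hypothesis $E(L)=T\Delta$ itself --- so the clause about $T^{-1}\Delta$ may well be a slip in the theorem statement; but read literally, it requires both of the missing ingredients to be supplied or assumed, and the paper offers no argument for either.
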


\begin{rem} In the first case, $\al$ need not be a symmetry, but if it is, then symmetry invariance implies the critical point condition.  Moreover, in Section \ref{Ex1sec}, we see that symmetry invariance occurs after dividing by the kernel $f(t)$ of $T$, analogous to conservation of $\beta$ holding after dividing by the kernel of $\ms D$.
\end{rem}

\section{Conclusions}
%%%%%
In this paper, we discussed the problem of correspondence between symmetries and conservation laws for a general class of  differential systems (quasi-Noether systems). Our approach is based on the Noether operator identity. We discussed some properties of Noether identity, and showed that it leads to the same conservation laws as Green-Lagrange identity. We generated classes of quasi-Noether equations of the second order of evolutionary and quasilinear form. 

\medskip
We %%%%% then 
introduced the notion of a \textit{quasi-Lagrangian}, which generalizes the quantity $L:=\beta\cdot\Delta$ used in previous works to the case where $L|_{\Delta=0}$ may not be zero.  This generalization recovers the case where $\Delta=E(L)$ is actually a Lagrangian system, thus allowing us to extend Noether theorem %%%%%in 
(Theorem \ref{thm:qL}).  The previous work was not able to achieve both of the following: 1. achieve an extension of Noether while still involving symmetry vector fields (see \cite{Anco2017} for an extension using only cosymmetries), and 2. recover the case of a Lagrangian system $\Delta=E(L)$.

%%%%%
\begin{comment}
\medskip
The main difference of our extension is that the vector field $X_\al$ is only a sub-symmetry of the PDE $\Delta$, not a true symmetry, since we are considering variational symmetries of $L$.  This highlights the incompleteness of the previous approach, which requires using ordinary symmetries.  We presented an example where all conservation laws arise in this way, many only from sub-symmetries.

\medskip
We then recalled the notion of invariant submanifolds, particularly those of cosymmetries and characteristics, which we called critical points. Already well known in the Hamiltonian case as soliton solutions, we gave examples of the compatibility of cosymmetry invariance and the critical point condition, including for a non-Hamiltonian system.  Although many current papers are devoted to the construction of Lie-type invariant solutions and conservation laws for non-Hamiltonian systems, we are not aware of any which constructed the critical points of these conservation laws.
\end{comment}

%%%%%
Based on the notion of invariant submanifolds we introduced critical points of conservation laws, and gave examples of the compatibility of cosymmetry invariance and the critical point condition, including for a non-Hamiltonian system.

\medskip
We concluded with a comparison of the invariant submanifolds of quasi-Noether and Hamiltonian evolution equations, and showed these systems are ``opposite" in some sense.  For Lagrangian systems, the cosymmetry and symmetry invariant submanifolds coincide, while in general there is a containment in one direction or the other.

\end{document}